\pgfplotsset{compat=newest}
\definecolor{tudcyan}{RGB}{0,166,214}
\definecolor{tudmagenta}{RGB}{109,23,127}
\definecolor{tudpurple}{RGB}{29,28,115}
\definecolor{tudgraygreen}{RGB}{107,134,137}
\colorlet{lighttudcyan}{tudcyan!20}
\colorlet{lighttudmagenta}{tudmagenta!20}
\newlength{\hatchspread}
\newlength{\hatchthickness}
\newlength{\hatchshift}
\newcommand{\hatchcolor}{}
\tikzset{hatchspread/.code={\setlength{\hatchspread}{#1}},
	hatchthickness/.code={\setlength{\hatchthickness}{#1}},
	hatchshift/.code={\setlength{\hatchshift}{#1}},
	hatchcolor/.code={\renewcommand{\hatchcolor}{#1}}}
\tikzset{hatchspread=7pt,
	hatchthickness=0.5pt,
	hatchshift=0pt,
	hatchcolor=black}
\def\centerarc[#1](#2)(#3:#4:#5)
\newtheorem{prob}{Problem}
\newtheorem{defn}{Definition}
\newtheorem{prop}{Proposition}
\newcommand*{\tran}{^{\mkern-1.5mu\mathsf{T}}\!}  
\def\d{\ensuremath{\mathrm{d}}}
\DeclareMathOperator{\Post}{Post}
\DeclareMathOperator{\bigO}{\mathcal{O}}
\DeclareMathOperator{\Val}{Val}
\def\Vcoop{V_{\mathrm{coop}}}
\def\Vadv{V_{\mathrm{adv}}}
\def\VU{V_{\mathrm{U}}}
\def\norm[#1]{\left|#1\right|}
\def\shortnorm[#1]{|#1|}
\def\altsim{\ensuremath{\preceq_{\mathrm{AS}}}}
\def\dummy{}
\def\Xs{\mathcal{X}}
\def\Ys{\mathcal{Y}}
\def\Us{\mathcal{U}}
\def\Vs{\mathcal{V}}
\def\No{\mathbb{N}_{0}}
\def\N{\mathbb{N}}
\def\R{\mathbb{R}}
\def\S{\mathbb{S}}
\def\Q{\mathbb{Q}}
\def\Rs{\mathcal{R}}
\def\Ss{\mathcal{S}}
\def\Ks{\mathcal{K}}
\def\Qs{\mathcal{Q}}
\def\Es{\mathcal{E}}  
\def\Us{\mathcal{U}}  
\def\Ts{\mathcal{T}}
\def\nup{{n_{\mathrm{u}}}}
\def\nx{{n_{\mathrm{x}}}}
\def\xv{\boldsymbol{x}}
\def\xiv{\boldsymbol{\xi}}
\def\zetav{\boldsymbol{\zeta}}
\def\Am{\boldsymbol{A}}
\def\Bm{\boldsymbol{B}}
\def\O{\mathbf{0}}
\def\Mm{\boldsymbol{M}}
\def\Nm{\boldsymbol{N}}
\def\Pm{\boldsymbol{P}}
\def\Qm{\boldsymbol{Q}}
\def\Km{\boldsymbol{K}}
\def\Pl{\Pm_{\mathrm{Lyap}}}
\def\Ql{\Qm_{\mathrm{Lyap}}}
\def\e{\mathrm{e}}
\newcommand{\fakeparagraphnospace}[1]{\vspace{1mm}\noindent\textbf{#1.}}
\newcommand{\fakeparagraph}[1]{\fakeparagraphnospace{#1}}
\title{Self-Triggered Control for Near-Maximal Average Inter-Sample Time}
\author{Gabriel de A.~Gleizer, Khushraj Madnani and Manuel Mazo Jr.
	\thanks{This work is supported by the European Research Council through the SENTIENT project (ERC-2017-STG \#755953).}
	\thanks{G.~de A.~Gleizer, K.~Madnani and M.~Mazo Jr. are with the Delft Center for Systems and Control,
		Delft Technical University, 2628 CD Delft, The Netherlands
		{\tt\small \{g.gleizer, k.n.madnani-1, m.mazo\}@tudelft.nl}}%
}
\begin{document}
	
	\maketitle
	\thispagestyle{empty}
	\pagestyle{empty}
	
	\begin{abstract}
		
		Self-triggered control (STC) is a sample-and-hold control method aimed at reducing communications within networked-control systems; however, existing STC mechanisms often maximize how late the next sample is, and as such they do not provide any sampling optimality in the long-term. In this work, we devise a method to construct self-triggered policies that provide near-maximal  average inter-sample time (AIST) while respecting given control performance constraints. 
		To achieve this, we rely on finite-state abstractions of a reference event-triggered control, in which early triggers are also allowed. These early triggers constitute controllable actions of the abstraction, for which an AIST-maximizing strategy can be computed by solving a mean-payoff game. We provide optimality bounds, and how to further improve them through abstraction refinement techniques.
	\end{abstract}

	\section{INTRODUCTION}
	As networked control systems (NCSs) become the norm in the automation industry, significant attention has been given to sampling and scheduling mechanisms for control. In particular, self-triggered control (STC) \cite{velasco2003self} was proposed as a method to reduce required resource usage. The idea behind STC is that the controller decides the next sampling instant in an aperiodic fashion. In \cite{tabuada2007event}, Tabuada proposed a related method called event-triggered control (ETC), finally establishing a rigorous treatment for stabilization under such aperiodic methods. In ETC, a function between actual state measurements and the past samples is monitored on the sensor side, and data is only transmitted upon an occurrence of a significant event.
	After \cite{tabuada2007event}, much work was dedicated to the design of triggering mechanisms that reduce sampling while guaranteeing stability and performance criteria \cite{wang2008event, girard2015dynamic, heemels2012introduction}. STC was then typically designed by estimating when such events would happen on the controller (e.g., \cite{mazo2010iss}).
	 	
	Among all the aforementioned works, the commonality lies on the philosophy of sampling: it must be as late as possible. Quite often, the \emph{triggering condition} is a surrogate of an underlying Lyapunov function condition, either ensuring its decrease in continuous time  \cite{tabuada2007event}, its boundedness with respect to a decaying function \cite{wang2008event, mazo2010iss, szymanek2019periodic}, or its decrese in average \cite{girard2015dynamic}. The sampling happens as soon as the condition is violated, that is, as late as possible. Posing this sampling strategy as a mechanism to maximize \emph{average inter-sample time} (AIST) --- a natural metric of sampling performance ---, this corresponds to a \emph{greedy} approach for optimization: one maximizes the short-term reward hoping that this brings long-term maximization. It is not surprising that this approach is very rarely optimal, and we have no reason to believe this would be the case with sampling. Aiming at performance improvements, Antunes et al \cite{antunes2012dynamic} addressed the problem of co-designing controllers and sampling strategies that minimize a quadratic control performance while ensuring that the AIST is not smaller than some reference periodic controller. Before the ETC era, the problem of minimizing a weighted sum of quadratic performance and AIST for linear controllers with random disturbances was addressed in \cite{xu2004optimal}, using an elegant dynamic programming approach; unfortunately, this approach lacks the practicality of its STC successors, as the decision to sample or not requires solving an online recursion involving complicated operators. Therefore, we choose to focus on the more practical Lyapunov-based ETC and STC strategies, 
	even though their simplicity comes with a cost: the emerging traffic patterns are often erratic. In fact, even rigorously measuring the AIST of a given ETC implementation is challenging, and only recently a computational method to do this for periodic ETC (PETC, \cite{heemels2013periodic})%
	\footnote{In PETC, events are monitored periodically instead of continuously; this is more realistic in terms of implementation.} %
	has been proposed in \cite{gleizer2021hscc}.
	
	In this work, we address the problem of synthesizing a self-triggered mechanism for state feedback of linear systems that maximizes the AIST, while ensuring the same control performance as a reference PETC. This is attained by considering predicted PETC triggering times as \emph{deadlines}: sampling earlier than these deadlines always ensures equal or better control performance, and done in the right way it can provide long-term benefits in terms of AIST.
	%
	To determine when these long-term rewards are attained, we build on our recent work in \cite{gleizer2021hscc} by using finite-state abstractions \cite{tabuada2009verification} of the reference PETC, and appending weights on the transitions systems, obtaining \emph{weighted automata} \cite{chatterjee2010quantitative}. Finite-state abstractions can simulate all possible behaviors of the concrete systems, while permitting computations that are otherwise intractable or impossible in infinite-state systems.
	By augmenting the finite-state simulations with \emph{controllable} inter-event times up to the PETC deadline, we obtain an alternating simulation relation which also retains quantitative properties. As we will see, solving a \emph{mean-payoff game} on the abstraction gives a strategy that approximately maximizes the AIST; the value of the abstraction game is in fact a lower bound to the optimal one, while computing the (cooperative) maximum AIST of the system gives an upper bound to it. %
	Our approach gives a state-dependent sampling strategy (SDSS) that requires predicting the next $l$ inter-sample times that the reference PETC would generate, $l$ being a chosen parameter. As we will see in a numerical example, even $l=1$ can provide massive improvements to the closed-loop system's AIST by simply using the strategy obtained from the mean-payoff game. The computational cost of this prediction is proportional to $l$, making our approach implementable on hardware with limited capabilities or time-critical applications. In addition, a major benefit of using abstractions is that the methodology is general in the sense that it can be used for more complex control specifications; and while we focus on the control of a single linear system, extensions to multiple controllers sharing a network \cite{kolarijani2015symbolic, gleizer2020scalable} and nonlinear systems \cite{delimpaltadakis2020traffic} are possible using existing abstraction methods.

	\subsection{Notation}
	
	We denote by $\No$ the set of natural numbers including zero, $\N \coloneqq \No \setminus \{0\}$, $\N_{\leq n} \coloneqq \{1,2,...,n\}$, and $\R_+$ the set of non-negative reals. 
	For a set $\Xs\subseteq\Omega$, we denote by $\bar{\Xs}$ its complement $\Omega \setminus \Xs$, and by $|\Xs|$ its cardinality.
	We often use a string notation for sequences, e.g., $\sigma = abc$ reads $\sigma(1) = a, \sigma(2) = b, \sigma(3) = c.$ Powers and concatenations work as expected, e.g., $\sigma^2 = \sigma\sigma = abcabc.$ We denote by $\Xs^+$ (resp.~$\Xs^\omega$) the sets of finite (resp.~infinite) sequences with elements on $\Xs$.
	For a relation $\Rs \subseteq \Xs_a \times \Xs_b$, its inverse is denoted as $\Rs^{-1} = \{(x_b, x_a) \in \Xs_b \times \Xs_a \mid (x_a, x_b) \in \Rs\}$. Finally, we denote by $\pi_\Rs(\Xs_a) \coloneqq \{ x_b \in \Xs_b \mid (x_a, x_b) \in \Rs \text{ for some } x_a \in \Xs_a\}$ the natural projection $\Xs_a$ onto $\Xs_b$.
	
	Given an autonomous system $\dot{\xiv}(t) = f(\xiv(t))$, we say that the origin is globally exponentially stable (GES) if there exist $M < \infty$ and $b > 0$ such that every solution of the system satisfies $|\xiv(t)| \leq M\e^{-bt}|\xiv(0)|$ for every initial state $\xiv(0)$; moreover, we call $b$ a decay rate estimate of the system. When the inital state is not obvious from context, we use $\xiv_{\xv}(t)$ to denote a trajectory from initial state $\xiv(0) = \xv.$
	
	\section{PROBLEM FORMULATION}
	
	Consider a linear plant controlled with sample-and-hold state feedback described by
	\begin{equation}
	\dot{\xiv}(t) = \Am\xiv(t) + \Bm\Km\hat{\xiv}(t), \label{eq:plant}
	\end{equation}
	where $\xiv(t) \in \R^\nx$ is the plant's state, $\hat{\xiv}(t) \in \R^\nx$ is the measurement of the state available to the controller, $\Km\hat{\xiv}(t) \in \R^\nup$ is the control input, $\nx$ and $\nup$ are the state-space and input-space dimensions, respectively, and $\Am, \Bm, \Km$ are matrices of appropriate dimensions. We consider a zero-order hold mechanism: let $t_i \in \R_+, i \in \N_0$ be a sequence of sampling times, with $t_0 = 0$ and $t_{i+1} - t_i > \varepsilon$ for some $\varepsilon > 0$; then $\hat{\xiv}(t) = \xiv(t_i), \forall t \in [t_i, t_{i+1})$. In STC, the controller uses available information at time $t_i$ to determine the next sampling time $t_{i+1}$. Because the system of interest is time-invariant, it suffices to determine the next inter-sample time (IST) $\tau_{i} \coloneqq t_{i+1} - t_i$. Generally speaking, when full-state information is available, STC is a controller composed with a \emph{state-dependent sampling strategy} (SDSS) $s : (\R^\nx)^+ \to \R_+$ that can consider the first $i$ state samples to determine $\tau_{i}$. A \emph{static} state-dependent sampling strategy has the form $s : R^\nx \to \R_+$. The algorithm that dictates $\tau_{i}$ must guarantee given performance specifications while being fast enough to execute in real time, depending on the application.

	Given system \eqref{eq:plant}, and an initial state $\xv \coloneqq \xiv(0),$ an SDSS determines a unique state trajectory $\xiv_{\xv}(t)$, as well as the sequence of IETs $\tau_i$, since $\tau_0 = s(\xv)$ and the following recursion on $i$ holds:
	\begin{equation}\label{eq:sdssrecursion}
		\begin{aligned}
			\xiv_{\xv}(t) &= \Mm(t - t_i)\xiv_{\xv}(t_i), \ \forall t \in [t_i, t_{i+1}] \\
			t_{i+1} &= t_i + \tau_i, \\
			\tau_i &= s(\xiv_{\xv}(t_i), \xiv_{\xv}(t_{i-1}), ... \xiv_{\xv}(t_0)),
		\end{aligned}
	\end{equation}
	where $\Mm(t) \coloneqq \Am_\d(t) + \Bm_\d(t)\Km \coloneqq \e^{\Am t} + \int_0^{t}\e^{\Am\tau}\d\tau \Bm\Km$ is the state transition matrix under the held control input. Thus, for convenience, we denote the unique IET sequence of a given SDSS $s$ from a given initial condition $\xv$ by $\tau_i(\xv; s)$. With this, we can define the average IST (AIST) from $\xv$ as
	$$ \text{AIST}(\xv; s) \coloneqq \liminf_{n\to\infty}\frac{1}{n+1}\sum_{i=0}^{n}\tau_i(\xv; s), $$
	and the smallest AIST (SAIST) across all initial conditions as
	\begin{equation}
		\text{SAIST}(s) \coloneqq \inf_{\xv\in\R^{\nx}}\text{AIST}(\xv; s).
	\end{equation}

	The SAIST of a given SDSS gives its sampling performance, which we are interested in maximizing. %
	
	A standard approach to SDSS design is predicting the time a relevant Lyapunov-based event would happen, as in \cite{mazo2010iss, fiter2012state, heemels2012introduction}. We consider \emph{quadratic triggering conditions}, which encompass the most common ones in the ETC and STC literature, see \cite{heemels2013periodic}. The idea is to design a matrix $\Qm \in \S^{2\nx}$ such that
	\begin{equation}\label{eq:trigcond}
		\begin{bmatrix}\xv \\ \hat\xv \end{bmatrix}\tran
		\!\Qm \begin{bmatrix}\xv \\ \hat\xv\end{bmatrix} \leq 0 \implies \dot{V}(\xv) \leq -aV(\xv)
	\end{equation}
	for some $a > 0$, where $V$ is a Lyapunov function. That is, while the triggering condition (LHS above) is satisfied, some Lyapunov condition is also satisfied, ensuring stability and performance specifications are met. In ETC, this condition is checked continuously and the sample occurs as soon as it is violated. In PETC, this condition is checked every $h$ time units, and further Lyapunov analysis needs to be made to ensure specifications are met, see \cite{heemels2013periodic}. In STC, instead of checking events, the controller predicts them at every $h$ units and typically chooses the last one such that the condition is still met:
	\begin{gather}
		t_{i+1} = \sup\{t \in h\N \mid t > t_i \text{ and } c(t-t_i, \xiv(t), \hat\xiv(t))\}, \label{eq:trigtime}\\
		c(\tau, \xv, \hat\xv) \coloneqq 
		\begin{bmatrix}\xv \\ \hat\xv \end{bmatrix}\tran
		\!\Qm \begin{bmatrix}\xv \\ \hat\xv\end{bmatrix} \leq 0
		\text{ or } \ \tau \leq \bar{\tau},  \label{eq:quadtrig}
	\end{gather}
	where $\bar{\tau}$ is a chosen maximum inter-event time that ensures the search space is finite.%
	\footnote{\label{foot:bartau} Often a maximum $\bar{\tau}$ naturally emerges form a PETC triggering condition. If it ensures GES of the origin, this holds if all eigenvalues of $\Am$ have non-negative real part or, more generally, if the control action $\Bm\Km\xv \neq \O$ for all $\xv$ in the asymptotically stable subspace of $\Am$. In such situations, $\xiv(t)$ does not converge to the origin without intermittent resampling. Still, one may want to set a smaller maximum inter-event time so as to establish a ``heart beat'' of the system.}

	Clearly, the ETC and STC strategies described above are greedy sampling strategies: they maximize the next IST while ensuring the triggering condition is not violated.%
	\footnote{In the PETC case, the triggering condition can be violated for over up to $h$ time units, but one can use a predictive approach as in \cite{szymanek2019periodic} to prevent violations.}
	Therefore, we can use the ETC-generated IST as a state-dependent \emph{deadline} $d : \R^{\nx} \to \Ts$ of an SDSS:
	\begin{equation}\label{eq:deadline}
		d(\xv) \coloneqq \max\{\tau \in \Ts \mid c(\tau, \xiv_{\xv}(\tau), \xv)\},
	\end{equation}
	where $\Ts = \{h, 2h, ..., \bar\tau\}$ is the set of possible inter-event times, and $\bar\tau = hK,$ with $K \in \N$. The question that naturally arises is whether sampling earlier than this deadline can provide long-term benefits in terms of average inter-sample time. This possibility is exactly what we exploit in this work. Hereafter, we shall refer to an SDSS $s^* : (\R^\nx)^+ \to \Ts$ that respects the deadlines in Eq.~\eqref{eq:deadline} as an \emph{early-triggering SDSS}; doing so provides the RHS of Eq.~\ref{eq:trigcond} as a performance guarantee. The overall problem we are interested to solve is as follows:
	
	\begin{prob}\label{prob:ideal}
		Given system \eqref{eq:plant} and a triggering matrix $\Qm \in \S^{2\nx}$, design a SAIST-maximizing early-triggering SDSS $s^* : (\R^\nx)^+ \to \Ts$; i.e., $\text{SAIST}(s^*) \geq \text{SAIST}(s)$ for all early-triggering SDSSs $s$.
	\end{prob}

	\subsection{A simplification of the problem}
	
	Prior to approaching Problem \ref{prob:ideal}, let us see a property of the AIST metric:
	
	\begin{prop}\label{prop:limavgperiodic}
		Let $\tau$ be an infinite sequence that is asymptotically periodic, i.e., its elements can be decomposed as $\tau_i = a_i + b_i$, where $a_i = a_{i+M}$ for all $i\in\N$ and some natural number $M < \infty$, and $b_i$ satisfy $\lim_{i\to\infty} b_i \to 0$. Then, $\liminf_{n\to\infty}\frac{1}{n+1}\sum_{i=0}^{n}\tau_i = \frac{1}{M}\sum_{i=0}^{M-1} a_i.$
	\end{prop}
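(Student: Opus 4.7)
The plan is to show the stronger statement that the actual limit (not just $\liminf$) of the Cesàro averages exists and equals $\frac{1}{M}\sum_{i=0}^{M-1} a_i$. Since the sequence $\tau_i$ splits additively as $a_i + b_i$, I would split the partial sum correspondingly and handle the two pieces independently:
\begin{equation*}
  \frac{1}{n+1}\sum_{i=0}^{n}\tau_i \;=\; \underbrace{\frac{1}{n+1}\sum_{i=0}^{n}a_i}_{(\mathrm{I})} \;+\; \underbrace{\frac{1}{n+1}\sum_{i=0}^{n}b_i}_{(\mathrm{II})}.
\end{equation*}

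For term (I), I would use the periodicity of $a_i$ directly. Writing $n+1 = qM + r$ with $q \in \No$ and $0 \leq r < M$, the periodicity $a_{i+M} = a_i$ yields
\begin{equation*}
  \sum_{i=0}^{n} a_i \;=\; q\sum_{i=0}^{M-1}a_i \;+\; \sum_{i=0}^{r-1}a_i.
\end{equation*}
The second term on the right is uniformly bounded (a sum of at most $M$ fixed values), so after dividing by $n+1$ it vanishes as $n \to \infty$, while $q/(n+1) \to 1/M$. Hence term (I) converges to $\frac{1}{M}\sum_{i=0}^{M-1}a_i$.

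For term (II), the standard Cesàro-mean argument applies: since $b_i \to 0$, for any $\varepsilon > 0$ pick $N$ with $|b_i|<\varepsilon$ for $i \geq N$; then split $\frac{1}{n+1}\sum_{i=0}^{n} b_i$ into the finite head through $N-1$ (which vanishes under division by $n+1$) and the tail, which is bounded in absolute value by $\varepsilon$. This gives $(\mathrm{II}) \to 0$.

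Combining (I) and (II), the sequence of Cesàro averages converges, so its $\liminf$ equals its limit, yielding the claim. I do not foresee a genuine obstacle here; the only subtle point is being careful that $a_i$ being periodic does not require it to be nonnegative or bounded in any special way beyond its implicit boundedness (it takes only $M$ distinct values), and that the Cesàro argument for (II) does not need any summability of $b_i$, only that $b_i \to 0$.
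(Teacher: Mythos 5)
Your proof is correct, and it shares the paper's basic plan---split the Cesàro average of $\tau$ into the contribution of the periodic part $a$ and of the vanishing part $b$---but the way you pass to the limit is genuinely different. The paper never establishes convergence of either piece on its own; it takes $\liminf$ of the decomposition and invokes superadditivity, $\liminf_n (x_n + y_n) \geq \liminf_n x_n + \liminf_n y_n$, disposes of the $b$-term by citing the Stolz--Cesàro theorem to squeeze $\liminf_n \frac{1}{n+1}\sum_{i=0}^{n} b_i$ between $\liminf_i b_i = 0$ and $\limsup_i b_i = 0$, and then runs the mirror-image argument with $\limsup$ and subadditivity to get the matching upper bound, concluding convergence from the two one-sided estimates. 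Your route instead proves that each term converges---term (I) by the division-algorithm bookkeeping $n+1 = qM + r$ for the periodic average, term (II) by the standard head/tail $\varepsilon$-argument---and then uses the algebra of limits, so you never need super/subadditivity nor the two-pass $\liminf$/$\limsup$ squeeze. What your version buys is self-containedness: it supplies a proof of the fact, used but left unproved in the paper, that the Cesàro average of a periodic sequence converges to its period average, and it replaces the Stolz--Cesàro citation with an elementary argument. What the paper's version buys is brevity, precisely because it outsources those two facts to known results; both proofs end up establishing the same stronger conclusion, namely that the Cesàro averages actually converge.
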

	\begin{proof}
		We use the superadditivity property of $\liminf$:
		\begin{multline*}
			\liminf_{n\to\infty}\frac{1}{n+1}\sum_{i=0}^{n}\tau_i \\ \geq \liminf_{n\to\infty}\frac{1}{n+1}\sum_{i=0}^{n}a_i + \liminf_{n\to\infty}\frac{1}{n+1}\sum_{i=0}^{n}b_i \\=  \frac{1}{M}\sum_{i=0}^{M-1} a_i + \liminf_{n\to\infty}\frac{1}{n+1}\sum_{i=0}^{n}b_i.	
		\end{multline*}
		Stolz--Cesàro theorem gives that the second term is zero:
		$$ \liminf_{n\to\infty}b_i = 0 \leq \liminf_{n\to\infty}\frac{1}{n+1}\sum_{i=0}^{n}b_i \leq \limsup_{n\to\infty}b_i = 0. $$
		The same arguments can be used with $\limsup$, obtaining $\limsup_{n\to\infty}\frac{1}{n+1}\sum_{i=0}^{n}\tau_i \leq \frac{1}{M}\sum_{i=0}^{M-1} a_i.$ Hence, the sequence $\frac{1}{n+1}\sum_{i=0}^{n}\tau_i$ converges and
		$$ \liminf_{n\to\infty}\frac{1}{n+1}\sum_{i=0}^{n}\tau_i = \lim_{n\to\infty}\frac{1}{n+1}\sum_{i=0}^{n}\tau_i = \frac{1}{M}\sum_{i=0}^{M-1} a_i.$$
	\end{proof}
	
	Proposition \ref{prop:limavgperiodic} shows that, if a sequence of inter-sample times $\tau_i$ converges to a periodic pattern, then AIST is solely a function of the periodic component; in other words, it is \emph{insensitive to transient behavior}, which makes it a fundamentally \emph{long-term,} or \emph{stationary-behavior-dependent} metric. This not only reinforces our argument against a greedy sampling strategy, but also indicates how complicated Problem \ref{prob:ideal} is: it may require an infinite look-ahead capability. Moreover, it is worthless to study how the deadline $d(\xv)$ of Eq.~\eqref{eq:deadline} behaves as $\xv \to 0$ by exploiting GES of the closed-loop system: deadlines are insensitive to the state magnitude.%
	\footnote{It is easy to see that, if $d(\xv) = \tau$ for some $\xv$ in Eq.~\ref{eq:deadline}, then $s(\lambda\xv) = \tau$ for any $\lambda \in \R \setminus \{0\}.$ Similar results have been obtained for other triggering conditions, see \cite{mazo2010iss}.} 
	
	As a final simplification, we focus on \emph{static} SDSSs --- it is possible to show that, given the forgetfulness of the AIST metric, static strategies suffice for optimality; we skip this proof due to space limitations.
	The relaxed problem statement becomes the following:
	\begin{prob}\label{prob:real}
		Given system \eqref{eq:plant} and a triggering matrix $\Qm \in \S^{2\nx}$, design a static early-triggering SDSS $s^* : \R^\nx \to \Ts$ that, given $\epsilon$, satisfies $\text{SAIST}(s^*) \geq \text{SAIST}(s) - \epsilon$ for all early-triggering SDSSs $s$.
	\end{prob}

	\section{Finding an SDSS through abstractions}
	
	To solve Problem \ref{prob:real}, we adapt the abstraction framework of \cite{tabuada2009verification} to include quantitative capabilities; that is, we abstract our system as a \emph{weighted automaton}, in a similar way as we did in \cite{gleizer2021hscc}.
	
	\subsection{Transition systems}
	
	In \cite{tabuada2009verification}, Tabuada gives a generalized notion of transition system. Here, we include a weight function as in \cite{chatterjee2010quantitative} for the quantitative aspect.
	\begin{defn}[Weighted transition system]\label{def:system} 
		A system $\Ss$ is a tuple $(\Xs,\Xs_0,\Us,\Es,\Ys,H,\gamma)$ where:
		\begin{itemize}
			\item $\Xs$ is the set of states,
			\item $\Xs_0 \subseteq \Xs$ is the set of initial states,
			\item $\Us$ is the set of actions,
			\item $\Es \subseteq \Xs \times \Us \times \Xs$ is the set of edges (or transitions),
			\item $\Ys$ is the set of outputs, and
			\item $H: \Xs \to \Ys$ is the output map.
			\item $\gamma: \Es \to \Q$ is the weight function.
		\end{itemize}
	\end{defn}
	A system is said to be finite (infinite) state when the cardinality of $\Xs$ is finite (infinite), and it is said to be just finite if both $\Xs$ and $\Us$ are finite. A transition in $\Es$ is denoted by a triple $(x, u, x')$. We define $\Post_\Ss(x) \coloneqq \{x'\mid \exists u: (x,u,x') \in \Es\}$ as the set of states that can be reached from $x$ in one step. System $\Ss$ is said to be \emph{non-blocking} if $\forall x \in \Xs, \Post_\Ss(x) \neq \emptyset.$ 
	We call $x_0u_0x_1u_1x_2...$ an \emph{infinite internal behavior}, or \emph{run} of $\Ss$ if $x_0 \in \Xs_0$ and $(x_i,u_i,x_{i+1}) \in \Es$ for all $i \in \N$, $y_0y_1...$ its corresponding \emph{infinite external behavior}, or \emph{trace}, if $H(x_i) = y_i$ for all $i \in \N$, and $v_0v_1...$ its \emph{value trace} if $v_i = \gamma(x_i,u_i,x_{i+1})$ for all $i \in \N$. We denote by $\gamma_{\Ss}(r)$ the value trace of $r = x_0u_0x_1...$ (in the case above, $\gamma_{\Ss}(r) = v_0v_1...$), by $\Vs^\omega_x(\Ss)$ the set of all possible value traces of $\Ss$ starting from state $x$, and by $\Vs^\omega(\Ss) \coloneqq \bigcup_{x\in\Xs_0}\Vs^\omega_x(\Ss)$ the set of all possible value traces of $\Ss$. We denote by $\gamma(\Es)$ the set of all possible weight valuations of $\Ss$, which is guaranteed to be finite if $\Ss$ is finite. A state $x$ is called \emph{reachable} if there exists a run $r$ of $\Ss$ containing $x$; for the rest of this paper, we assume every state is reachable. This does not affect generality as one can always remove unreachable states from a transition system without changing its behavior. We denote by $U(x) \coloneqq \{u \mid \exists x' \in \Xs : (x,u,x') \in \Es\}$ the action set of $x \in \Xs$. 
	
	A system $\Ss_b$ is loosely called an \emph{abstraction} of another system $\Ss_a$ if it retains some properties of $\Ss_b$ while being simpler. Typically, we use the term abstraction to denote a finite system $\Ss_a$ that \emph{simulates} or is \emph{alternatingly simulated} by a possibly infinite system $\Ss_b$ (see \cite{tabuada2009verification} for a thorough exposition of these concepts). When one is interested in verification, simulation is used; if one wants to use $\Ss_a$ to find a strategy for $\Ss_b$ that ensures certain properties, the relation of interest is alternating simulation. We use the following definitions, which are amenable to quantitative properties.
	
	\begin{defn}[Weight simulation relation]\label{def:sim}
		Consider two systems $\Ss_a$ and $\Ss_b$ with $\gamma_a(\Es_a)$ = $\gamma_b(\Es_b)$. A relation $\Rs \subseteq \Xs_a \times \Xs_b$ is a \emph{weight simulation relation} from $\Ss_a$ to $\Ss_b$ if the following conditions are satisfied:
		\begin{enumerate}[i)]
			\item for every $x_{a0} \in \Xs_{a0}$, there exists $x_{b0} \in \Xs_{b0}$ with $(x_{a0}, x_{b0}) \in \Rs;$
			\item for every $(x_a, x_b) \in \Rs,$ we have that $(x_a, u_a, x_a') \in \Es_a$ implies the existence of $(x_b, u_b, x_b') \in \Es_b$ satisfying $\gamma_a(x_a, u_a, x_a') = \gamma_b(x_b, u_b, x_b')$ and $(x_a', x_b') \in \Rs.$
		\end{enumerate}
	\end{defn}
	We say that $\Ss_b$ simulates $\Ss_a$ if there is a weight simulation relation from $\Ss_a$ to $\Ss_b$, denoting it by $\Ss_a \preceq \Ss_b$. It is a simple exercise to see that $\Ss_a \preceq \Ss_b$ implies $\Vs^\omega(\Ss_a) \subseteq \Vs^\omega(\Ss_b)$, but the converse is not true.
	\begin{defn}[Alternating weight simulation relation]\label{def:altsim}
		Consider two systems $\Ss_a$ and $\Ss_b$ with $\Us_a \subseteq \Us_b$ and $\gamma_a(\Es_a)$ = $\gamma_b(\Es_b)$. A relation $\Rs \subseteq \Xs_a \times \Xs_b$ is a \emph{alternating weight simulation relation} from $\Ss_a$ to $\Ss_b$ if the following conditions are satisfied:
		\begin{enumerate}[i)]
			\item for every $x_{b0} \in \Xs_{b0}$, there exists $x_{a0} \in \Xs_{a0}$ with $(x_{a0}, x_{b0}) \in \Rs;$
			\item for every $(x_a, x_b) \in \Rs,$ it holds that $U_a(x_a) \subseteq U_b(x_b)$;
			\item for every $(x_a,x_b) \in \Rs, u \in U_a(x_a), x_b' \in \Post_u(x_b),$ there exists $x_a' \in \Post_u(x_a)$ such that $\gamma_a(x_a, u, x_a') = \gamma_b(x_b, u, x_b')$ and $(x_a',x_b') \in R.$
		\end{enumerate}
	\end{defn}
	We say that $\Ss_b$ is an alternating weight simulation of $\Ss_a$, denoting it by $\Ss_a \altsim \Ss_b$. This definition is useful for using strategies for $\Ss_a$ on $\Ss_b$: for any initial state in $\Xs_{b0}$, we can initialize the abstraction $\Ss_a$ with a related state. Then, any action $u$ available at $x_{a0}$ is also available in $\Ss_b$ thanks to condition (ii). Finally, whatever transition $(x_b, u, x_b')$ system $\Ss_b$ takes, we can pick a transition with same weight in $\Ss_a$ that again leads to related states $(x_a', x_b')$, and both systems can continue progressing. The main difference in our definitions from those in \cite{tabuada2009verification}, apart from the weighted aspect, is that we are concerned with an ``output map'' on transitions rather than on states; this is more convenient for our intended application, but not fundamentally different as one could convert a system with weights on transitions to one with weights on states. The following results are useful for our application:
	\begin{prop}\label{prop:altsimgivessim}
		Consider two systems $\Ss_a$ and $\Ss_b$ and an alternating weight simulation relation $\Rs \subseteq \Xs_a \times \Xs_b$ from $\Ss_a$ to $\Ss_b$. If $\forall (x_a, x_b) \in \Rs,  U_a(x_a) = U_b(x_b) $, then $\Rs^{-1}$ is a weight simulation relation from $\Ss_b$ to $\Ss_a$.
	\end{prop}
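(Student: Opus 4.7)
The plan is to verify directly that $\Rs^{-1}$ satisfies the two conditions (i) and (ii) of Definition~\ref{def:sim} of a weight simulation relation, now from $\Ss_b$ to $\Ss_a$. The proof will be essentially a bookkeeping exercise, with the only non-trivial ingredient being where the hypothesis $U_a(x_a)=U_b(x_b)$ enters.

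First, I would dispatch the initial-state condition. Pick any $x_{b0} \in \Xs_{b0}$. Condition (i) of Definition~\ref{def:altsim}, applied to the alternating weight simulation $\Rs$, directly yields some $x_{a0} \in \Xs_{a0}$ with $(x_{a0},x_{b0}) \in \Rs$, which by definition of the inverse means $(x_{b0},x_{a0}) \in \Rs^{-1}$. This establishes (i) of Definition~\ref{def:sim}.

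Next, I would handle the transition-matching condition. Take any $(x_b,x_a) \in \Rs^{-1}$, so $(x_a,x_b) \in \Rs$, and any transition $(x_b,u,x_b') \in \Es_b$. Then $u \in U_b(x_b)$ and $x_b' \in \Post_u(x_b)$. Here is the only place the extra hypothesis is needed: the assumption $U_a(x_a)=U_b(x_b)$ gives $u \in U_a(x_a)$, which in turn allows me to invoke condition (iii) of Definition~\ref{def:altsim}. That condition produces a state $x_a' \in \Post_u(x_a)$ such that $\gamma_a(x_a,u,x_a')=\gamma_b(x_b,u,x_b')$ and $(x_a',x_b') \in \Rs$. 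Thus $(x_a,u,x_a') \in \Es_a$ with the matching weight, and $(x_b',x_a') \in \Rs^{-1}$, which is exactly condition (ii) of Definition~\ref{def:sim}.

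There is no real obstacle here; the proof is a short symbol-chase. The only subtlety worth highlighting in the write-up is why the equality $U_a(x_a)=U_b(x_b)$ is indispensable: without it, an action $u$ picked from $\Ss_b$ might fail to lie in $U_a(x_a)$, and condition (iii) of the alternating simulation would not apply, so one could not guarantee the existence of a matching transition in $\Ss_a$. With the equality in place, the alternating simulation's game-theoretic guarantee (``the opponent's choice of $x_b'$ can be matched by some $x_a'$'') yields the demon-style simulation required by Definition~\ref{def:sim}.
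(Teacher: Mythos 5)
Your proof is correct and takes essentially the same approach as the paper: condition (i) of Def.~\ref{def:sim} is obtained by flipping condition (i) of Def.~\ref{def:altsim}, and condition (ii) is established by using the hypothesis $U_a(x_a)=U_b(x_b)$ to bring the action $u$ into $U_a(x_a)$ and then invoking Def.~\ref{def:altsim}(iii) to produce a weight-matching transition with related successors. The only difference is cosmetic: the paper wraps the transition-matching step in a proof by contradiction, whereas you argue it directly, which is logically equivalent.
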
 
	\begin{proof}Condition (i) of Def.~\ref{def:sim} is the flipped version of Def.~\ref{def:altsim} (i). Now assume Def.~\ref{def:sim} (ii) is false for some $(x_b, x_a) \in \Rs^{-1}$; then there is $(x_b, u_b, x_b')$ without a matching transition in $\Ss_a$. However, by assumption, $u_b \in U_a(x_a)$; hence, from Def.~\ref{def:altsim} (iii), every $x_b' \in \Post_{u_b}(x_b)$ has a matching $x_a' \in \Post_{u_b}(x_a)$ s.t.~$(x_a', x_b') \in \Rs$, which is a contradiction.
	\end{proof}
	
	Now, denote by $\Ss|s$ the system $\Ss$ controlled by strategy $s$; the following result is very similar to other results on alternating simulations (e.g.,~\cite{tabuada2009verification}):
	\begin{prop}\label{prop:altsimisgood}
		 Let $\Ss_a$ and $\Ss_b$ be two non-blocking systems such that $\Ss_a \altsim \Ss_b$. Let $\Vs^* \subseteq \Vs^\omega(\Ss_b)$ be a set of weight traces. If there exists a strategy $s: X_b^{+} \to \Us$ such that $\Vs^\omega(\Ss_b|s) \subseteq \Vs^*,$ then there exists a strategy $s' : X_a^{+} \to \Us$ such that $\Vs^\omega(\Ss_a|s') \subseteq \Vs^*.$
	\end{prop}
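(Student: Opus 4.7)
My plan is a standard alternating-simulation refinement argument, adapted to the weighted setting. The key idea is to construct a coupling between plays of the two systems via the relation $\Rs$, so that coupled runs visit related states and generate identical weight traces term-by-term. The new strategy is then defined by consulting the hypothesized strategy along the shadow play, and the desired value-trace inclusion follows because coupled plays have equal value traces.

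For the base case of the coupling, condition (i) of Def \ref{def:altsim} lets me fix, once and for all, a choice function producing an $\Rs$-related partner for each relevant initial state. For the inductive step, starting from paired current states $(x_a, x_b) \in \Rs$, the action prescribed by the given strategy on the shadow history lies in both action sets by condition (ii), so it can be legally played on both sides. Condition (iii) then supplies, for every possible successor on one side, a related successor on the other with identical weight; fixing in advance an additional choice function for these matching successors turns the coupling into a deterministic step-by-step construction.

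With the coupling in hand, defining the new strategy is pure bookkeeping: at any finite history, consult the shadow history and output the action the given strategy would choose. By construction, every infinite play under the new strategy is coupled to a legitimate play of the given strategy, and condition (iii) forces weight equality at every step, so the two value traces coincide. Since the given strategy's value traces all lie in $\Vs^*$ by hypothesis, so do the new strategy's, yielding $\Vs^\omega(\Ss_a|s') \subseteq \Vs^*$.

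The main obstacle I anticipate is not technical depth but careful bookkeeping: the new strategy must be a well-defined function of finite history on one system alone, rather than depending on arbitrary choices in the coupling or on the adversary's future moves. This is handled by fixing all the choice functions (for initial-state matches and successor matches) up front, before the induction begins. Once this is done, the non-blocking assumption ensures runs can always be extended, and the remaining argument closely parallels the classical alternating-simulation refinement lemma of \cite{tabuada2009verification}, with the weight equality guaranteed by condition (iii) providing the new quantitative ingredient.
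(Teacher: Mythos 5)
Your proposal is correct and takes essentially the same route as the paper's proof: initialize a coupling using Def.~\ref{def:altsim}(i), play the given strategy's action (available on both sides by condition (ii)), match each adversarial successor with an equal-weight related successor via condition (iii), and recurse; your up-front choice functions simply make explicit the selections that the paper's sketch leaves implicit. One remark: both your argument and the paper's actually transfer a strategy \emph{from} $\Ss_a$ \emph{to} $\Ss_b$ --- the only direction Def.~\ref{def:altsim} supports, since (i) pairs each $\Ss_b$-initial state with a related $\Ss_a$-state and (ii) reads $U_a(x_a)\subseteq U_b(x_b)$ --- so the proposition's literal wording (given $s$ on $X_b^+$, produce $s'$ on $X_a^+$) has its subscripts swapped; your closing line $\Vs^\omega(\Ss_a|s')\subseteq\Vs^*$ inherits that typo from the statement, but the construction you describe is the intended (and provable) one, matching how the result is used in Prop.~\ref{prop:bounds}.
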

	\begin{proof}(Sketch) Consider the alternating simulation relation $\Rs \subseteq (X_a,X_b)$. Take any state $x_b \in X_{b0}:$ we can pick a related $x_a$ (Def.~\ref{def:altsim} (i)). Choose $u = s(x_a);$ then $\forall (x_a,u,x_a') \in \Es_a$, $\gamma_a(x_a,u,x_a')$ is the first element of some sequence in $\Vs^*$. From Def.~\ref{def:altsim} (ii), we can use $u$ from $x_b$. From (iii), for any $(x_b,u,x_b') \in \Es_b$ there exists $(x_a,u,x_a') \in \Es_a$ with $(x_a',x_b') \in \Rs$, with matching weights. Since all such $(x_a,u,x_a')$ have valid weights, so does any $(x_b,u,x_b')$. Hence, from any $x_b'$ we can pick a related $x_a'$ and choose $u = s(x_ax_a')$ for the next iteration. Repeating the same arguments recursively concludes the proof.
	\end{proof}
	
	The proof of Prop.~\ref{prop:altsimisgood} gives a way to use a strategy from an abstraction for the concrete system: simply run the abstraction in parallel with the concrete system: from any state $x_b$, pick the winning action $u$ from a related state $x_a$ and move the abstraction forward. If the strategy $s$ is \emph{positional} (static), then the abstraction running in parallel is unnecessary: the alternating simulation relation alone suffices, as long as the relation satisfies $|\{x_a \mid (x_a, x_b) \in \Rs\}| = 1$ for all $x_b \in \Xs_b$, i.e., every concrete state has a single related abstract state. This is the case for quotient systems \cite{tabuada2009verification}, which we will employ here. Hereafter we assume this is true; then, given a static strategy $s_a : \Xs_a \to \Us_a \subseteq \Us_b$, we call $s_b : \Xs_b \to \Us_b$ a \emph{refined strategy to $\Ss_b$}, or simply a \emph{refinement of $s_a$}, by setting $s_b(x_b) = s_a(x_a)$, $x_a$ being the unique state satisfying $(x_a, x_b) \in \Rs.$	
	
	\subsection{Abstractions for optimal average weight}\label{ssec:weight_abstractions}
	
	A transition system according to Def.~\ref{def:system} can be regarded as a game, where player 0 picks the action and player 1 antagonistically picks the transition. The problem of finding a strategy for a finite-state weighted transition system in order to maximize the average weight is known as mean-payoff game, a quantitative game. Quantitative games are games on weighted transition systems where a value function $\Val: \Vs^\omega(\Ss) \to \R$ is defined on runs, and one wants to find a strategy $s$ for player 0 that either
	\begin{enumerate}[i)]
		\item ensures a minimum value $v$, i.e, $\Val(w) \geq v$ for all $w \in \Vs^\omega(\Ss|s)$, or
		\item maximizes the value, i.e., finds $\bar{v}$ such that $\Val(w) \geq \bar{v}$ for all $w \in \Vs^\omega(\Ss|s)$ and that, for any $v > \bar{v}$, there is no strategy $s'$ s.t.~$\Val(w') \geq v$ for all $w' \in \Vs^\omega(\Ss|s')$.
	\end{enumerate}
	In the case of mean-payoff objectives, the value function is
	$$ \Val(v_0v_1...) = \liminf_{n\to\infty}\frac{1}{n+1}\sum_{i=0}^{n}v_i. $$
	There exist positional (i.e., static) optimal strategies for mean-payoff games \cite{ehrenfeucht1979positional},%
	\footnote{Solving the optimal value and strategy of a mean-payoff game has pseudopolynomial complexity, with the best known bound of $\bigO(|\Xs|^2|\Es|W)$ due to \cite{comin2017improved}, where $W$ is the maximum weight. Weights are assumed to be integers for complexity analysis; since we assume weights to be in $\Q$, this can always be done by appropriate normalization.}
	which implies that every $x \in \Xs$ admits an \emph{optimal value} $V(x)$ which is the smallest value obtained from every run starting at $x$ when both players 0 and 1 play optimally. By considering that we cannot control the initial state, i.e., player 1 picks it, we have that the \emph{game value} is $V(\Ss) \coloneqq \bar{v} = \inf\{V(x) \mid x\in\Xs_0\}$.
	Once a strategy $s$ has been decided for a system $\Ss$, $\Ss|s$ becomes a 1-player game, containing only the (antagonistic) environment. In the case of a 1-player game, where the environment chooses initial states and transitions, the value of the game can be taken as a function of its weight behaviors, i.e., $\Vadv(\Ss|s) = \inf\{\Val(w) \mid w \in \Vs^\omega(\Ss|s)\}$. We also define a cooperative value from a given initial condition $\Vcoop^{x}(\Ss) \coloneqq \sup\{\Val(w) \mid w \in \Vs^\omega_x(\Ss)\}$, where player 0 has control of transitions and actions for any given initial state $x$; furthermore, let $\VU(\Ss) \coloneqq \inf_{x\in\Xs_0}\Vcoop^{x}(\Ss)$: essentially, player 1 picks the initial state, but player 0 can choose the run from it. 
	The following result, very similar to what was done in \cite{mazo2011symbolicb} for time-optimal control, gives that a strategy from an abstraction, refined to the concrete system, ensures that at least the abstraction value is attained in the concrete case, while an upper bound can be obtained from $\VU$:
	%
	\begin{prop}\label{prop:bounds}
		Let $\Ss_a \altsim \Ss_b \preceq \Ss_a$. Then, $V(\Ss_b) \leq \VU(\Ss_a).$ Moreover, let $s_a$ be a strategy for $\Ss_a$ such that $\Vadv(\Ss_a|s_a) \geq v$. Then, if $s_b$ is a refinement of $s_a$, it holds that $\Vadv(\Ss_b|s_b) \geq v$.
	\end{prop}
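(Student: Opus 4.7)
The plan is to prove the two claims separately, chaining simulation/refinement arguments with elementary inf/sup manipulations.

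For the upper bound $V(\Ss_b) \le \VU(\Ss_a)$, I will split it as $V(\Ss_b) \le \VU(\Ss_b) \le \VU(\Ss_a)$. The first inequality is just ``cooperation helps player~0'': unfolding definitions, for every $x \in \Xs_{b0}$ one has
\[
V(x) \;=\; \sup_{s_b}\,\inf_{w\in\Vs^\omega_x(\Ss_b|s_b)}\Val(w) \;\le\; \sup_{w\in\Vs^\omega_x(\Ss_b)}\Val(w) \;=\; \Vcoop^x(\Ss_b),
\]
since dropping the adversarial choice of transition turns the inner inf into a sup over a superset. Taking $\inf_{x\in\Xs_{b0}}$ on both sides yields $V(\Ss_b)\le\VU(\Ss_b)$. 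For the second inequality, I invoke the simulation $\Ss_b\preceq\Ss_a$: for each $x_{b0}\in\Xs_{b0}$ there is a related $x_{a0}\in\Xs_{a0}$, and by repeatedly applying Def.~\ref{def:sim}(ii) every run of $\Ss_b$ from $x_{b0}$ can be matched by a run of $\Ss_a$ from $x_{a0}$ with the same value trace, so $\Vcoop^{x_{b0}}(\Ss_b)\le\Vcoop^{x_{a0}}(\Ss_a)$. Taking the infimum over $x_{b0}$ and using the standing convention that $\Xs_{a0}$ coincides with the image of $\Xs_{b0}$ under the relation (so the infimum ranges over the same abstract states as $\VU(\Ss_a)$), we obtain $\VU(\Ss_b)\le\VU(\Ss_a)$.

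For the refinement claim, I will show directly that $\Vs^\omega(\Ss_b|s_b)\subseteq \Vs^\omega(\Ss_a|s_a)$ at the level of value traces. Let $\Rs$ be the (functional, by hypothesis) alternating simulation relation. I will induct on the run prefix: given a partial run $x_{b0}u_0x_{b1}\dots x_{bk}$ of $\Ss_b|s_b$ with a parallel partial run $x_{a0}u_0x_{a1}\dots x_{ak}$ of $\Ss_a|s_a$ satisfying $(x_{ai},x_{bi})\in\Rs$ and matching weights, observe that $u_k=s_b(x_{bk})=s_a(x_{ak})$ by the definition of refinement; then Def.~\ref{def:altsim}(ii)--(iii) guarantee that for any successor $x_{b,k+1}\in\Post_{u_k}(x_{bk})$ there is $x_{a,k+1}\in\Post_{u_k}(x_{ak})$ with $(x_{a,k+1},x_{b,k+1})\in\Rs$ and $\gamma_a(x_{ak},u_k,x_{a,k+1})=\gamma_b(x_{bk},u_k,x_{b,k+1})$, closing the induction. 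Hence every value trace produced by $\Ss_b|s_b$ is a value trace of $\Ss_a|s_a$, and
\[
\Vadv(\Ss_b|s_b)=\inf_{w\in\Vs^\omega(\Ss_b|s_b)}\Val(w)\;\ge\;\inf_{w\in\Vs^\omega(\Ss_a|s_a)}\Val(w)=\Vadv(\Ss_a|s_a)\ge v.
\]

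The main obstacle I expect is the initial-state bookkeeping in Part~1: the simulation only guarantees the existence of a related $x_{a0}$ for each $x_{b0}$, not the converse, so the final step $\inf_{x_{b0}}\Vcoop^{x_{a0}(x_{b0})}(\Ss_a)=\VU(\Ss_a)$ is not automatic and must be justified by the abstraction convention $\Xs_{a0}=\pi_\Rs(\Xs_{b0})$ (otherwise $\VU(\Ss_a)$ could be made artificially small by spurious abstract initial states). The inductive argument of Part~2 is essentially a re-use of the construction in the proof sketch of Prop.~\ref{prop:altsimisgood}, made rigorous using the functionality assumption on $\Rs$ so that the refinement $s_b(x_b)=s_a(x_a)$ is well-defined.
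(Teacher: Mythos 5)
Your proof is correct and follows essentially the same route as the paper's: the same decomposition $V(\Ss_b) \leq \VU(\Ss_b) \leq \VU(\Ss_a)$ for the first claim, and for the second claim a run-matching induction which is precisely the argument the paper invokes (via Tabuada, Sec.~8.2) to conclude $\Ss_b|s_b \preceq \Ss_a|s_a$ and hence value-trace inclusion. Where you go beyond the paper is the middle inequality: the paper justifies $\VU(\Ss_b) \leq \VU(\Ss_a)$ by ``(weight) behavioral inclusion,'' which, taken literally as $\Vs^\omega(\Ss_b) \subseteq \Vs^\omega(\Ss_a)$, does not suffice --- since $\VU$ is an infimum over initial states of per-state suprema, an abstract initial state related to no concrete initial state could drag $\VU(\Ss_a)$ below $\VU(\Ss_b)$. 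Your patch, requiring every abstract initial state to be in the image of the concrete initial set under the relation, is exactly the condition guaranteed by the paper's quotient construction (Def.~\ref{def:lsim}, where $\Xs_l = \pi_{\Rs_l}(\Xs)$ and all states of both systems are initial), so your bookkeeping makes rigorous a step the paper's sketch glosses over. One wording correction: as you yourself half-note, the inequality should be chained starting from the abstract side, not via a choice function $x_{b0} \mapsto x_{a0}(x_{b0})$ (whose image need not cover $\Xs_{a0}$, and whose infimum bounds $\VU(\Ss_b)$ from above but is itself only bounded \emph{below} by $\VU(\Ss_a)$); instead, for each $x_a \in \Xs_{a0}$ pick a related concrete initial state $x_b$, observe $\VU(\Ss_b) \leq \Vcoop^{x_b}(\Ss_b) \leq \Vcoop^{x_a}(\Ss_a)$ using the per-pair run matching of Def.~\ref{def:sim}(ii), and then take the infimum over $x_a$.
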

	
	\begin{proof} (Sketch) The first inequality comes from $V(\Ss_b) \leq \VU(\Ss_b) \leq \VU(\Ss_a)$; the first holds by definition of $\VU$, where player 1 is more powerful than in the original game; the second inequality holds from (weight) behavioral inclusion. For the last statement, it can be seen using similar arguments as \cite[Sec.~8.2]{tabuada2009verification} that $\Ss_b|s_b \preceq \Ss_a|s_a$; hence, $\Vs^\omega(\Ss_b|s) \subseteq \Vs^\omega(\Ss_a|s)$, which implies that $\Vadv(\Ss_a|s_a) \geq v \implies \Vadv(\Ss_b|s_b) \geq v.$
	\end{proof}
	
	In light of this result, one can use an abstraction $\Ss_a$ to find a near-optimal strategy for the concrete system $\Ss_b$, and further estimate its optimality gap by computing $\epsilon = \VU(\Ss_a) - \Vadv(\Ss_a|s_a)$.%
	\footnote{The 1-player-game values can be obtained for finite systems using Karp's algorithm \cite{karp1978characterization}, whose complexity, $\bigO(|\Xs||\Es|),$ is much smaller than that for optimal mean-payoff games; for $\VU$, a combination of Karp's algorithm and reachability on graphs can be used, retaining the same complexity.}
	
	\subsection{SDSS design}
	
	Now that we have the relevant notions of abstraction in place, we can proceed to apply them towards an SDSS design for System \eqref{eq:plant}. The first step is to describe the system as an infinite transition system; in this case, the ``control action'' is the (discrete) sampling time, where all sampling times from $h$ until the deadline $d(\xv)$ (Eq.~\eqref{eq:deadline}) are allowed.
	The system is $\Ss \coloneqq (\Xs, \Xs_0, \Us, \Es, \Ys, H, \gamma)$ where
	\begin{multline}\label{eq:original}
		\begin{aligned}
			&\Xs = \Xs_0 = \R^\nx; \\
			&\Us = \Ys = \{1,2,...,K\}; \\
			&\Es = \{(\xv, u, \xv') \mid hu \leq d(\xv) \text{ and } \xv' = \xiv_{\xv}(hu)\}; \\
			&H(\xv) = d(\xv)/h; \\
			&\gamma(\xv,u,\xv') = hu.
		\end{aligned}
	\end{multline}
	Note the special characteristics of this transition system: (i) the input set of a given state is determined by the state's output, which is its deadline; and, (ii) the weight is solely a function of the chosen input. These characteristics simplify the job of find an abstraction that is alternatingly weight simulated by $\Ss$. We can do a standard quotient system \cite{tabuada2009verification} as in \cite{gleizer2020scalable}, where the motivation was to allow early triggers for scheduling of multiple control loops in a single shared network. Better than that, we can start from the $l$-complete models \cite{moor1999supervisory} from \cite{gleizer2021hscc}, which can predict the next $l$ deadlines if the PETC triggering strategy is used, and further augment the abstraction with early trigger actions. Hence, let us recover the relation in \cite{gleizer2021hscc} that allows the construction of the quotient state-space:
	\begin{defn}[Deadline sequence relation \cite{gleizer2021hscc}]\label{def:relation} Given a sequence length $l$, we denote by $\Rs_l \subseteq \Xs \times \Ys^l$ the relation satisfying 
		$(\xv,\dummy{k_1k_2...k_l}) \in \Rs_l$ if and only if
		\begin{subequations}\label{eq:sequence}
			\begin{align}
				\xv &\in \Qs_{k_1}, \label{eq:sequence_k1}\\
				\Mm(hk_1)\xv &\in \Qs_{k_2}, \label{eq:sequence_k2}\\
				\Mm(hk_2)\Mm(hk_1)\xv &\in \Qs_{k_3}, \label{eq:sequence_k3}\\
				& \vdots \nonumber\\
				\Mm(hk_{l-1})...\Mm(hk_1)\xv &\in \Qs_{k_l}, \label{eq:sequence_kl-1}
			\end{align}
		\end{subequations}
		where 
		\begin{equation}\label{eq:setq}
			\begin{gathered}
				\Qs_k \coloneqq \Ks_k \setminus \left(\bigcap_{j=1}^{k-1} \Ks_{j}\right) = \Ks_k \cap \bigcap_{j=1}^{k-1} \bar{\Ks}_{j}, \\
				\Ks_k \coloneqq \begin{cases}
					\{\xv \in \Xs \mid \xv\tran\Nm(hk)\xv > 0\}, & k < K, \\
					\R^\nx, & k = K.
				\end{cases}
			\end{gathered}
		\end{equation}
	\end{defn}
	
	Eq.~\eqref{eq:setq} defines the sets $\Qs_k$, containing the states whose deadline is $hk$. Eq.~\eqref{eq:sequence} simply asserts that a state $\xv \in \R^n$ is related to a state $k_1k_2...k_l$ of the abstraction if the IET sequence that it generates for the next $l$ samples is $hk_1,hk_2,...,hk_l$ when the PETC triggering rule is used, i.e., triggers occur at deadlines. Note that $\Qs_k$ is a conjunction of quadratic inequalities; likewise, denoting $\sigma \coloneqq k_1k_2...k_l$, we can define the set $\Qs_\sigma \coloneqq \{\xv \in \R^{\nx} \mid \xv \text{ satisfies \eqref{eq:sequence}}\}$, which is also given by a conjunction of quadratic inequalities. Then, a transition from some state in $\Qs_\sigma$ to a state in $\Qs_{\sigma'}$ exists if $\exists \xv \in \R^{\nx}$ such that
	\begin{equation}\label{eq:transition}
		\begin{aligned}
			\xv &\in \Qs_\sigma \\
			\Mm(hu)\xv &\in \Qs_{\sigma'},
		\end{aligned}
	\end{equation}
	for some $u$ respecting the deadline, i.e., $u \leq k_1$. We can now define the following abstraction:
	\begin{defn}\label{def:lsim} Given an integer $l \geq 1$, the \emph{$l$-predictive traffic model} of $\Ss$ is the system $\Ss_l \coloneqq \left(\Xs_l, \Xs_l, \Us, \Es_l, \Ys, H_l, \gamma_l\right)$, with 
		\begin{itemize}
			\item $\Xs_l \coloneqq \pi_{\Rs_l}(\Xs)$,
			\item $\Es_l \coloneqq \{(\sigma, u, \sigma') \in \Xs_l \times \Us \times \Xs_l \mid u \leq \sigma(1), \exists \xv \in \R^{\nx} : \text{ Eq.~\eqref{eq:transition} holds}\},$
			\item $H_l(k_1k_2...k_l) \coloneqq k_1.$
			\item $\gamma_l(\sigma, u, \sigma') \coloneqq hu.$
		\end{itemize}
	\end{defn}
	The model above partitions $\R^\nx$ into subsets associated with the next $l$ deadlines times that the related states would generate under the PETC triggering rule. The transition set is determined by verifying Eq.~\eqref{eq:transition}, which is a reachability verification; both the state set and transition set can be determined by solving (non-convex) quadratic inequality satisfaction problems (Eqs.~\eqref{eq:sequence} and \eqref{eq:transition}) which can be done exactly using a satisfiability-modulo-theories (SMT) solver\footnote{For that, the variable is $\xv \in \R^\nx$ and the query is, e.g., $\exists \xv: $Eq.~\eqref{eq:sequence} holds.} such as Z3 \cite{demoura2008z3}, or approximately through convex relaxations as proposed in \cite{gleizer2020scalable}. This system is alternatingly weight simulated by $\Ss$, as desired:
	\begin{prop}\label{prop:mystuffworks}
		The relation $\Rs$ from Def.~\ref{def:relation} is a weight simulation relation from $\Ss$ (Eq.~\eqref{eq:original}) to $\Ss_l$ (Def.~\ref{def:system}), and $\Rs_l^{-1}$ is a alternating weight simulation relation from $\Ss_l$ to $\Ss$.
	\end{prop}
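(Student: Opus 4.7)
The plan is to verify both claims by directly unfolding Definitions~\ref{def:sim} and~\ref{def:altsim} and checking each condition; throughout, I will lean on the fact that Def.~\ref{def:relation} together with the partition property of the sets $\Qs_k$ in Eq.~\eqref{eq:setq} assigns to every $\xv \in \R^{\nx}$ a \emph{unique} deadline sequence $\sigma = k_1k_2 \dots k_l \in \Ys^l$, so $\Rs_l$ is in fact a (left-total, functional) map $\Xs \to \Xs_l$, with $\Xs_l = \pi_{\Rs_l}(\Xs)$ by definition.

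For the simulation $\Ss \preceq \Ss_l$ via $\Rs_l$, condition (i) of Def.~\ref{def:sim} is immediate from the projection definition of $\Xs_l$: every $\xv \in \Xs_0 = \R^\nx$ admits the unique $\sigma \in \Xs_l$ determined by iterating $\Mm(hk_j)$ as in Eq.~\eqref{eq:sequence}. For condition (ii), take $(\xv, \sigma) \in \Rs_l$ and any $(\xv, u, \xv') \in \Es$; by Eq.~\eqref{eq:original} this means $hu \leq d(\xv)$ and $\xv' = \Mm(hu)\xv$. Let $\sigma'$ be the deadline sequence of $\xv'$, so $(\xv', \sigma') \in \Rs_l$. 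Since $\xv \in \Qs_\sigma$, we have $d(\xv)/h = k_1 = \sigma(1)$, hence $u \leq \sigma(1)$; and Eq.~\eqref{eq:transition} is witnessed by $\xv$ itself, so $(\sigma, u, \sigma') \in \Es_l$. The weights $\gamma(\xv,u,\xv') = hu = \gamma_l(\sigma,u,\sigma')$ match by construction.

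For the alternating simulation $\Ss_l \altsim \Ss$ via $\Rs_l^{-1}$, condition (i) of Def.~\ref{def:altsim} is again the same observation as above, since for every $\xv \in \Xs_0$ there is a (unique) related $\sigma \in \Xs_l$. Condition (ii) follows from the equality $U(\xv) = \{1,\dots,d(\xv)/h\} = \{1,\dots,\sigma(1)\}$ (using $d(\xv) = h\sigma(1)$) together with $U_l(\sigma) \supseteq \{1,\dots,\sigma(1)\}$: indeed, for any such $u$, picking any $\xv \in \Qs_\sigma$ (nonempty because $\sigma \in \pi_{\Rs_l}(\Xs)$) and letting $\sigma'$ be the deadline sequence of $\Mm(hu)\xv$ witnesses a transition in $\Es_l$; conversely, every $u \in U_l(\sigma)$ obeys $u \leq \sigma(1)$ by Def.~\ref{def:lsim}. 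For condition (iii), given $(\sigma, \xv) \in \Rs_l^{-1}$, any $u \in U_l(\sigma)$, and $\xv' \in \Post_u(\xv)$, determinism of the sampled dynamics gives $\Post_u(\xv) = \{\Mm(hu)\xv\}$, and the unique deadline sequence $\sigma'$ of this $\xv'$ satisfies $(\xv',\sigma') \in \Rs_l$ and $(\sigma,u,\sigma') \in \Es_l$ by the same witness argument, with matching weight $hu$.

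The only place where any care is needed is condition (ii) of the alternating simulation: one must argue that the abstraction does not artificially offer an action that the concrete system cannot execute, and vice versa. This is precisely where the choice of $H$ and $H_l$ (output as the current deadline) and the definition of $\Es_l$ (actions constrained by $\sigma(1)$) were calibrated, so the verification reduces to a one-line matching of sets. The remaining work is bookkeeping: in each condition the required abstract successor is simply the deadline-sequence image of the unique concrete successor, and the required concrete witness is the related concrete state itself.
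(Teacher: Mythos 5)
Your proof is correct and follows essentially the same route as the paper: both hinge on the facts that $\Rs_l$ is a (total, single-valued) map induced by the partition $\{\Qs_\sigma\}$, that the action sets coincide ($U(\xv)=\{1,\dots,\sigma(1)\}=U_l(\sigma)$, which also fixes what is evidently a typo in the paper's sketch, where $\{1,\dots,k_l\}$ should read $\{1,\dots,k_1\}$), that both weight functions equal $hu$, and that the concrete state itself witnesses Eq.~\eqref{eq:transition}. The only cosmetic difference is that the paper checks Def.~\ref{def:altsim} once and then obtains the simulation half by invoking Prop.~\ref{prop:altsimgivessim} through the action-set equality, whereas you verify Def.~\ref{def:sim} directly; the content of that direct check is exactly what the cited proposition encapsulates, so nothing is gained or lost.
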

	\begin{proof}(Sketch) The proof of alternating simulation is obtained by checking the conditions of Def~\ref{def:altsim}: (i) is trivially satisfied, and so is (ii) with $U_l(k_1k_2...k_l) = U(\xv) = \{1, 2, ..., k_l\}.$  Condition (iii) is ensured by construction of $\Es_l$ and thanks to the fact that $\gamma(\xv,u,\xv') = \gamma_l(\sigma, u, \sigma') = hu$. The simulation then follows from Prop.~\ref{prop:altsimgivessim}.
	\end{proof}

	\fakeparagraph{Obtaining the strategy and $\epsilon$}
	Now that we have a method to abstract System \eqref{eq:original} into a finite system, we can use the methods from Section \ref{ssec:weight_abstractions} to build a near-optimal SDSS for the abstraction $\Ss_l$, then refine it for $\Ss$. The main question is how to define $l$. Given the results in \cite{gleizer2021hscc}, we suggest the following approach: (i) use \cite{gleizer2021hscc} to compute the exact PETC SAIST, or a close enough under-approximation of it; denote this value by $V(\Ss_l|s_{\mathrm{PETC}})$. Set $l = 1$; Then, (ii) compute $\Ss_l$ (Def.~\ref{def:lsim}) and solve the mean-payoff game for it, obtaining the strategy $s_l$ and the game value estimate $v_l$. After that, with $s_l'$ being a refinement of $s_l$ to $\Ss$, (iii) compute $\VU(\Ss_l)$ and verify, using a similar approach to \cite{gleizer2021hscc}, (a sufficiently close under-approximation of) $\Vadv(\Ss|s_l')$
	\footnote{Even though \cite{gleizer2021hscc} was proposed for the PETC strategy, the same approach can be used for any fixed sampling strategy, as its essential feature is verifying cycles in the concrete system.%
	}
	Finally, (iv) compute $\epsilon = \VU(\Ss_l) - V(\Ss_l)$ and (v) if the improvement $\Vadv(\Ss|s_l') - V(\Ss_l|s_{\mathrm{PETC}})$ is large enough, $\epsilon$ is small enough, or $l$ is too large,%
	\footnote{The value of $l$ is proportional to the complexity of the online part of the algorithm, in which the controller must predict the next $l$ deadlines of the current state $\xv$ under PETC. This involves simulating the PETC forward $l$ steps, which takes at most $lK$ operations of quadratic inequalities. Hence, a maximum $l$ may be needed due to computational constraints in the operation.}	
	stop; otherwise, increment $l$ and redo steps (ii) to (v).

	\section{Numerical example}\label{sec:num}
	
	Consider a plant and controller of the form \eqref{eq:plant} from \cite{tabuada2007event}
	\begin{equation*} \Am = \begin{bmatrix}0 & 1 \\ -2 & 3\end{bmatrix}, \ \Bm = \begin{bmatrix}0 \\ 1\end{bmatrix}, \ \Km = \begin{bmatrix}1 & -4\end{bmatrix}
	\end{equation*}
    with the predictive Lyapunov-based triggering condition \cite{gleizer2020towards, szymanek2019periodic} of the form
	$ V(\zetav(t)) > -\rho \zetav(t)\tran\Ql\zetav(t), $
	where $\zetav(t) \coloneqq \Am_\d(h)\xiv(t) + \Bm_\d(h)\Km\hat{\xiv}(t)$ is the next-sample prediction of the state, $V(\xv) = \xv\tran\Pl\xv$, and $0 < \rho < 1$ is the triggering parameter. The Lyapunov matrices were taken from \cite{tabuada2007event} as $\Pl = \begin{bsmallmatrix}1 & 0.25 \\ 0.25 & 1\end{bsmallmatrix}, \ \Ql = \begin{bsmallmatrix}0.5 & 0.25 \\ 0.25 & 1.5\end{bsmallmatrix}$, and we set $h = 0.1$ and $K = 20$, but the largest $K$ according to footnote \ref{foot:bartau} is equal to 11. A minimum-average-cycle-equivalent simulation \cite{gleizer2021hscc} is found for the PETC, giving a SAIST of approx.~0.233 (obtained with $l=8$).  %
	The strategy obtained with $l=1$ already gives massive improvements: it increases SAIST to $\Vadv(\Ss|s_1') = 0.5,$ which is more than twice the PETC's SAIST. Essentially, the obtained strategy simply limits $s(\xv)$ to $5$ for any $k \geq 5$ satisfying $(\xv, k) \in \Rs_1$. This surprisingly shows that limiting the maximum IET can actually have long-term benefits. With $l=2$, the SAIST is improved further to 0.6; a simulation comparing this strategy and PETC is depicted in Fig.~\ref{fig:dead}. One can see that only around $t=2.5$ the SDSS samples before the deadline for the first time; doing so prevents the bursts of IST equal to $0.1$ that happen recurrently with the PETC. The difference in (simulated) running averages between PETC and our SDSS is displayed in Fig.~\ref{fig:runavg}. The SAIST for $l=3$ does not improve, and the upper bounds are $\VU(\Ss_1) = \VU(\Ss_2) = \VU(\Ss_3) = hK = 1.1$.
	\begin{figure}
		\begin{center}
			\input{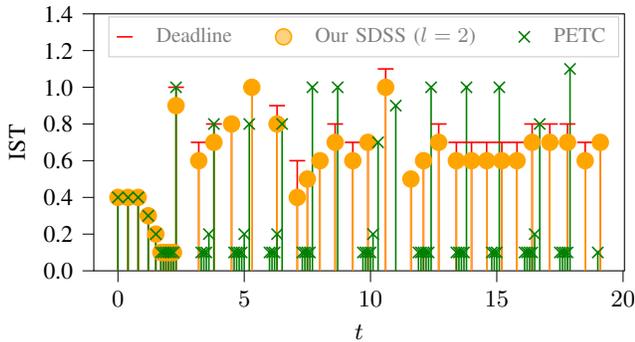}
			\vspace{-2em}
			\caption{\label{fig:dead} {Comparison between simulated traces of our SDSS ($l=2$) and of the PETC, both with the same initial state.}}
			\vspace{-1.5em}
		\end{center}
	\end{figure}
	\begin{figure}
		\begin{center}
			\vspace{0.7em}
			\input{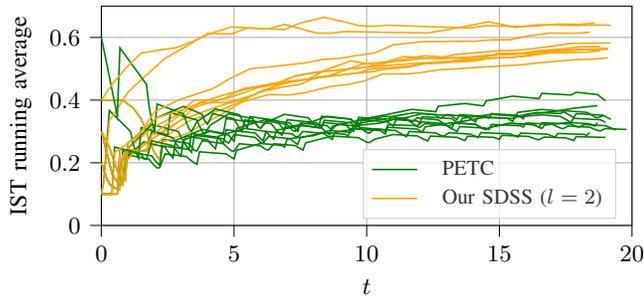}
			\vspace{-1.5em}
			\caption{\label{fig:runavg} {Running average of the ISTs generated from 10 different initial conditions under PETC and our near-optimal SDSS using $l = 2$.} }
			\vspace{-1.5em}
		\end{center}
	\end{figure}
	
	\section{DISCUSSION AND CONCLUSIONS}
	
	In this paper we have presented an abstraction-based approach to build aperiodic sampling strategies for LTI systems in order to maximize their average inter-sample time. For this we  rely on the properties of PETC strategies, which ensure stability and performance of the closed loop whenever their ``deadlines'' are respected. This makes our abstraction inherently safe from a control perspective, but one could relax this condition by allowing ``late'' samplings as long as the abstraction can incorporate some information about the control performance. Multiple ideas can be explored in this direction: e.g, further partioning the state-space with Lyapunov level sets and add reachability and safety specifications; or adding a Lyapunov-based cost to the transitions of the abstraction. In the latter case, one can either have a safety or reachability objective to the Lyapunov function (e.g., maximize AIST subject to $V(\xiv(t)) \leq r$ for all $t > T$), or play a multi-objective quantitative game (e.g., find the Pareto-optimal set of strategies that maximize AIST and control performance).%
	\footnote{Multi-objective mean-payoff games have been addressed in, e.g., \cite{velner2015multi}.}
	
	As seen in our numerical example, the gap $\epsilon$ is still very large after a few refinements. This can be related to the way we do our abstractions: we build a quotient state set based on the deadlines instead of the inputs, which is not how the standard bisimulation algorithm operates. Doing so has benefits for the complexity of the online operations, as to find the related abstract state one needs to simulate an autonomous system (the reference PETC) instead of an exploration involving all possible sampling times. However, this approach cannot (generally) eliminate the nondeterminism associated with sampling earlier than the deadline. Hence, to obtain $V(\Ss_l) \to V(\Ss)$ as $l \to \infty$, 
	refining based on the input set may be necessary; this is a subject of current investigation.
	
	Finally, as with most abstraction-based approaches, our method suffers from the curse of dimensionality. Even though we shift complexity to an offline phase, computing the abstractions $\Ss_l$ can easily be intractable as $\nx$ gets large enough. Approximate methods to solve the satisfiability problems involved in building $\Ss_l$ (e.g., as in \cite{gleizer2020scalable}) are subject of current investigation.

	\bibliographystyle{ieeetr} 
	\bibliography{mybib} 
		
\end{document}